\newtheorem{nntheorem}{\bf Theorem}[section]
\newtheorem{nnlemma}[nntheorem]{\bf Lemma}
\newtheorem{nndefinition}[nntheorem]{\bf Definition}
\newtheorem{nnproposition}[nntheorem]{\bf Proposition}
\newtheorem{nnassumption}[nntheorem]{\bf Assumption}
\newenvironment{theorem}
{\begin{nntheorem}\it}
{\end{nntheorem}}
\newenvironment{proposition}
{\begin{nnproposition}\it}
{\end{nnproposition}}
\newenvironment{lemma}
{\begin{nnlemma}\it}
{\end{nnlemma}}
\newenvironment{definition}
{\begin{nndefinition}\it}
{\end{nndefinition}}
\newcommand{\V}[1]{\mathbf{#1}}
\newcommand{\vZero}{\V{0}}
\newcommand{\vB}{\V{B}}
\newcommand{\vI}{\V{I}}
\newcommand{\vP}{\V{P}}
\newcommand{\vx}{\V{x}}
\newcommand{\vA}{\V{A}}
\newcommand{\vE}{\V{E}}
\newcommand{\vC}{\V{C}}
\newcommand{\vy}{\V{y}}
\newcommand{\vD}{\V{D}}
\title{\LARGE \bf
Cyber-Physical Security of Vehicles:\\ Zero Dynamics Attacks Against Vehicle's Lateral Dynamics}
\author{Ghadeer Shaaban, Hassen Fourati, Alain Kibangou, Christophe Prieur, and Mohammad Pirani
\thanks{Ghadeer Shaaban, Hassen Fourati, Alain Kibangou, and Christophe Prieur are with Univ. Grenoble Alpes, CNRS, Inria, Grenoble-INP, GIPSA-lab, Grenoble, France. Alain Kibangou is also with
the Faculty of Science, University of Johannesburg (Auckland Park
Campus), Johannesburg 2006, South Africa. Mohammad Pirani is with the Department of Mechanical Engineering,
University of Ottawa, Ottawa, ON K1N 6N5, Canada. (e-mails: ghadeer.shaaban@gipsa-lab.fr; hassen.fourati@gipsa-lab.fr; alain.kibangou@gipsa-lab.fr; christophe.prieur@gipsa-lab.fr; mpirani@uottawa.ca)}%
}
\begin{document}

\maketitle
\thispagestyle{empty}
\pagestyle{empty}

\begin{abstract}
Modern vehicles have evolved from mechanical systems to complex and connected ones controlled by numerous digital computers interconnected through internal networks. While this development has improved their efficiency and safety, it also brings new potential risks, particularly cyber-attacks. Several studies have explored the security of vehicle dynamics against such threats. Among these dynamics, the vehicle's lateral dynamics are crucial for maintaining stability and control during turns and maneuvers, making them a key focus of research. However, only a few recent studies have specifically investigated the security of lateral dynamics.
This paper explores the potential for zero dynamics attacks on the vehicle's lateral dynamics, where the attacker can remain undetected by leaving no trace on the system's outputs. Three scenarios are studied: when the output includes yaw rate, lateral acceleration, and their combination. These two critical measurements of a vehicle's lateral motion are accessible through the inertial measurement units (IMU) in every vehicle. For each scenario, the impact of zero dynamics attacks on system performance is analyzed and illustrated through simulations. Finally, the paper provides recommendations for securing vehicles' lateral dynamics against such attacks.
\end{abstract}
 \begin{keywords}
Zero dynamics attack,  lateral dynamics, cyber-physical security, vehicle security.
\end{keywords}

\section{Introduction}
\subsection{Motivation}
In recent years, modern vehicles become complex systems that contain hundreds of electronic control units, actuators and sensors communicating with each other through internal networks.
While this advancement enables significant functionalities and efficiencies, it also makes the vehicle vulnerable to security weaknesses and opens the door for cyber-attacks. The attacker can gain access to the vehicle's internal network, eavesdrop on the messages, and compromise sensor data and control input signals transmitted within it \cite{10.5555/2028067.2028073,kang_attack_2020}. Several cyber-physical attacks against vehicles have occurred in real-world scenarios. For example, an attacker remotely crashed a Jeep from 10 miles away, and another attacker took remote control of a Tesla from 12 miles away \cite{liuSecurePoseEstimation2019a}. As vulnerabilities to cyber-attacks against vehicles threaten human safety, addressing the vehicle’s security becomes a critical concern and fundamental problem that requires dedicated research efforts from both academic and industrial domains. 
\subsection{Literature review}
The research on Cyber-Physical Systems (CPS) security focuses on designing and 
defending systems against one of the three main types of cyber attacks: denial of service attacks (DoS), replay attacks, and
false data injection (FDI) attacks \cite{chongTutorialIntroductionSecurity2019}. In DoS attacks, the attacker blocks the transmission
of sensor measurements and control input signals. In replay attacks, the 
attacker records sensor measurements and replays them later instead of the actual 
measurements. In FDI attacks, the attacker injects
false data into the true sensor measurements and control inputs. 
Some researchs have already studied securing vehicles against
DoS attacks~\cite{abdollahi_biron_real-time_2018}, and reply attacks~\cite{al-shareeda_review_2020}. In~\cite{biroon_false_2022,ju_survey_2022,sun_survey_2022}, the FDI attacks against connected vehicles are studied, where the attacks occur on the transmitted signals between vehicles, rather than within the individual vehicle. In  \cite{hong_esp_2022,peng_security_2024,bertoni_non-invasive_2013}, FDI and spoofing attacks against vehicle sensors attacks are studied. 
\par
Fundamental vehicle dynamics are the lateral dynamics, which describe the vehicle's lateral movement. These dynamics have been widely studied in academic and industrial domains~\cite{gillespie_funamentals_2021,rajamani_vehicle_2012}. Lateral dynamics involve the lateral velocity and yaw rate dynamics, with two inputs: the steering angle, which is an input by the driver, and the yaw moment, which can be generated by independent in-wheel motors. The yaw moment plays a vital role in enhancing vehicle stability and controllability. The yaw rate can be directly measured by IMU sensors, but lateral velocity lacks direct measurement. Instead, it is estimated based on the dynamic model, inputs and other sensors' measurements, such as yaw rate measurements~\cite{kiencke_observation_1997,nam_estimation_2013}, lateral acceleration measurements~\cite{dixon_extended_2000}, or a combination of both~\cite{cheli_methodology_2007,chen_sideslip_2008}. Due to the importance of lateral dynamics, extensive research has focused on control methods \cite{changResilientControlDesign2019,haipingduStabilizingVehicleLateral2010,zhangVehicleLateralDynamics2016,zhangRobustGainschedulingEnergytopeak2014}, as well as on estimation and observation techniques for the lateral model \cite{dixon_extended_2000,doumiatiOnboardRealTimeEstimation2011,kiencke_observation_1997,nam_estimation_2013}.  On the other hand, only few and recent works have focused on the security of lateral model. 
Attacks that modify sensor signals
to cause damage in the lateral control have been proposed in~\cite{farivarCovertAttacksAdversarial2021}. 
In \cite{nekouei_randomized_2022}, security measures are proposed to protect against attackers who aim to infer the values of lateral controller gains.
In \cite{mohammadi_vehicle_2022,mohammadi_vehicle_2023}, the lateral model is compromised by attacking the braking system and continuously varying the longitudinal slip of the wheels.
\par
One class of FDI attacks that target system inputs is the zero dynamics attacks, where the attacker exploits the invariant zeros of the system to perform attacks leaving no trace on the system's outputs, making these attacks undetectable~\cite{pasqualettiAttackDetectionIdentification2013,pasqualettiControlTheoreticMethodsCyberphysical2015}. It is proven in~\cite{Shim2022} that the zero dynamics attacks are disruptive, i.e. puts the system on high risk, if the attacks excite unstable invariant zeros of the system.
\subsection{Contributions}
To the best of the authors' knowledge, zero dynamics attacks have not been studied for vehicle lateral dynamics. Moreover, studies on zero dynamics attacks offer theoretical insights but rarely provide comprehensive, real-world examples. In the current work, we study and analyze the invariant zeros of the vehicle's lateral model, and show how the attacker can exploit these zero dynamics to perform undetectable attacks, leaving no trace to the system's outputs, namely lateral acceleration and yaw rate. We study three cases of output, when the output consists only of yaw rate measurements, when it consists only of lateral acceleration measurements, and when it consists of a combination of both. Additionally, we exploit the relationship between the system's invariant zeros and its strong observability and detectability properties to analyze these characteristics in the lateral dynamics model. 
Our motivation for this work is not to create zero dynamics attacks but to evaluate vehicle security against them and improve protection measures.
 The main contributions of this work are:
\begin{enumerate}
\item \textbf{Attacks Design:} Study the existence of invariant zeros of the vehicle's lateral dynamics, design zero dynamics attacks and explore their potential to be disruptive.
    \item \textbf{Attacks Prevention:} Suggest measures to protect the vehicle's lateral dynamics against zero dynamics attacks.
    \item \textbf{Observability Under Attacks:} Investigate the strong observability and detectability properties of the lateral dynamics model by examining its invariant zeros.
\end{enumerate}
\par 
 The remainder of this paper is as follows: Section~\ref{sec:Preliminaries and Problem formulation} provides preliminaries on the vehicle lateral model, invariant zeros, zero dynamics attacks, and the problem statement. Section~\ref{sec:Zero Dynamic attack against vehicle lateral dynamic} studies the existence of invariant zeros, designs and analyzes the zero dynamics attacks for the three cases of output. Section~\ref{sec:simulation} provides some simulations to illustrate the findings. Finally, Section~\ref{sec:conclusion} concludes the paper.
\section{Preliminaries and Problem statement}
This section covers the fundamentals of vehicle's lateral dynamics, as well as the concepts of zero dynamics and zero dynamics attacks.
\label{sec:Preliminaries and Problem formulation}
\subsection{Vehicle's linear lateral model}
The two-degrees-of-freedom \textit{bicycle model} is a widely used approach for analyzing vehicle lateral dynamics, describing the dynamics of lateral velocity, $v_y$, and yaw rate, $r=\dot{\psi}$ \cite{gillespie_funamentals_2021,rajamani_vehicle_2012}, where $\psi$ is the vehicle's yaw (heading) angle. The vehicle's bicycle model with front-wheel steering is shown in Fig.~\ref{fig: Plan-view-vehicle}. The body frame which is attached to the vehicle, has its origin in the vehicle's center of gravity (CG), its $x$-axis and $y$-axis are aligned with the longitudinal and lateral axes, respectively. The distances from CG to the front and rear axles are denoted by $a$ and $b$, respectively. The steering angle, which is a command by the driver is denoted by $\delta$. In addition to the steering angle $\delta$, a control input $M_z$, representing the yaw moment, is designed to stabilize the lateral motion. 
\begin{figure}[hbt]
    \centering
    \includegraphics[width=0.99\linewidth]{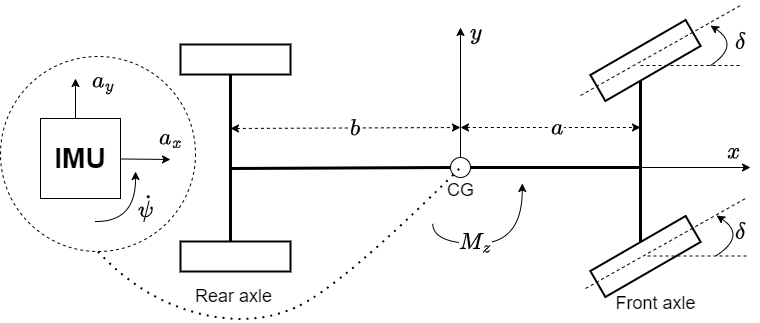}
    \caption{Plan view of vehicle dynamics model.}
    \label{fig: Plan-view-vehicle}
\end{figure}
The following linear state space model describes the vehicle lateral motion dynamics~\cite{gillespie_funamentals_2021,rajamani_vehicle_2012}:
\begin{equation}
    \dot{\vx}=\vA \vx + \vB M_z + \vE \delta,
    \label{eq: lateral dynamic model}
\end{equation}
where $\vx=\left[v_y\ r \right]^T$ is the state vector, and the matrices $\vA$, $\vB$ and $\vE$ are the state matrix, the input matrix for the yaw moment, and the input matrix for the steering angle, respectively, and have the following forms:
\begin{equation}
    \vA=\left[\begin{array}{cc}
        a_{11} & a_{12} \\
        a_{21} &  a_{22}
    \end{array} \right], \
    \vB=\left[\begin{array}{c}
         0 \\
         b_2 
    \end{array} \right], \
    \vE=\left[\begin{array}{c}
         e_1  \\
         e_2
    \end{array} \right],
\end{equation}
and ${a_{11}=-2 \frac{C_f+C_r}{v_x m}}$, ${a_{12}= 2 \frac{b C_r-a C_f}{v_x m}-v_x}$, ${a_{21}=2 \frac{b C_r-a C_f}{v_x I_z}}$, ${a_{22}=-2 \frac{a^2 C_f+b^2 C_r}{v_x I_z}}$, ${b_2=\frac{1}{I_z}}$, ${e_1=\frac{2 C_f}{m}}$ and ${e_2=\frac{2 a C_f}{I_z}}$. The parameters $C_f$, $C_r$, $v_x$, $m$, $I_z$ are the front and rear wheels' cornering stiffness, the vehicle's longitudinal velocity, the vehicle's total mass, and the vehicle's moment of inertia around $z$-axis, respectively.

\vspace*{5pt}

\textbf{Remark:} Although the bicycle model is relatively simple and relies on certain assumptions to yield a linear representation of the system, it effectively captures key lateral vehicle dynamics. Its effectiveness in practical applications validates its use in control design and analysis~\cite{gillespie_funamentals_2021,rajamani_vehicle_2012}.

\vspace*{5pt}

There is no direct measurement of the vehicle's lateral velocity, and it is
estimated using the dynamic model~\eqref{eq: lateral dynamic model} and knowledge of the system inputs $M_z$ and $\delta$, and measurement of yaw rate~\cite{kiencke_observation_1997,nam_estimation_2013}, lateral acceleration~\cite{dixon_extended_2000}, or both~\cite{cheli_methodology_2007,chen_sideslip_2008}. These measurements are feasible by using an IMU. We describe hereafter the output model in each case of measurements.
The first case concerns yaw rate as output, thus $y_1=r$, the output matrix is:
\begin{equation}
  \vC_1=\left[0 \ 1\right],
\label{eq:C1}
\end{equation}
The second case concerns lateral acceleration as output, thus $y_2=a_y=\dot{v}_y+v_x r$. From the dynamic model~\eqref{eq: lateral dynamic model} we have $y_2=a_{11}v_y+a_{12}r+e_1\delta+v_xr=\left[\begin{array}{cc}
      a_{11}   & a_{12}+v_x
    \end{array}\right]\vx + e_1 \delta$, the output matrix is: 
\begin{equation}
    \vC_2=\left[\begin{array}{cc}
      a_{11}   & a_{12}+v_x
    \end{array}  \right].
    \label{eq:C2}
\end{equation}
In the third case, the output consists of the two measurements, and 
the output model is given by $\vy_3=\left[r \ \ a_y \right]^T=\vC_3 \vx + \left[0 \ \ e_1\right]^T\delta$, where the output matrix $\vC_3$ is:
\begin{equation}
    \vC_3=\left[ 
    \begin{array}{cc}
        0 & 1 \\
        a_{11}  & a_{12}+v_x
    \end{array}
    \right].
    \label{eq:C3}
\end{equation}
The following equation summarizes the output model for the three cases of measurements:
\begin{equation}
  \vy_i=\vC_i\vx+\vD_i^\delta \delta,  
  \label{eq:output-model}
\end{equation}
where $i\in \{1,2,3\}$, $\vD_1^\delta=0$,  $\vD_2^\delta=e_1$, and $\vD_3^\delta=[0\ \  e_1]^T$.

\subsection{Zero dynamics attacks}
This subsection defines the undetectable attack~\cite{pasqualettiAttackDetectionIdentification2013,pasqualettiControlTheoreticMethodsCyberphysical2015}, and its relation with the existence of invariant zeros. It further presents the connection between zero dynamics attacks and strong observability and detectability properties. 
\subsubsection{Undetectable zero dynamics attacks}
We adapt the definition of undetectable attacks which is defined in~\cite{pasqualettiAttackDetectionIdentification2013,pasqualettiControlTheoreticMethodsCyberphysical2015} for the vehicle lateral model.
Recalling the linear vehicle lateral dynamics~\eqref{eq: lateral dynamic model} and the output model~\eqref{eq:output-model}:
\begin{equation}
    \begin{aligned}
    \dot{\vx}&=\vA \vx + \vB M_z + \vE \delta,\\
    \vy_i&=\vC_i\vx+\vD_i^\delta \delta.
    \end{aligned}
    \label{eq:lateral-dynamic-output-model}
\end{equation}

 We assume that the attacker injects signal $M_z^a$ to the input $M_z$, without altering the input $\delta$ as it is a mechanical command by the driver. 
\begin{definition}[Undetectable attacks~\cite{pasqualettiAttackDetectionIdentification2013,pasqualettiControlTheoreticMethodsCyberphysical2015}] Consider the linear system~\eqref{eq:lateral-dynamic-output-model}, and let $\vy_i\left(\vx^{\alpha}(t_0),M_z^a,t\right)$ be the system's output at time $t\geq t_0$, given an initial state $\vx^{\alpha}(t_0)$ and the presence of an injection of attack signal $M_z^a$, the attacks are considered undetectable if there exists an initial state $\vx^{\beta}(t_0)$ such that $\vy_i\left(\vx^{\alpha}(t_0),M_z^a,t\right)=\vy_i\left(\vx^{\beta}(t_0),0,t\right)$.
\label{def:undetectable-attack}
\end{definition}
Note that, because of the linearity of~\eqref{eq:lateral-dynamic-output-model}, the attack undetectability condition as presented in Definition~\ref{def:undetectable-attack} is equivalent to finding initial condition $\vx(t_0)$ resulting $\vy_i\left(\vx(t_0),M_z^a, t\right) =\vZero$, specifically 
$\vx(t_0)=\vx^{\alpha}(t_0)-\vx^{\beta}(t_0)$, for the following model:
\begin{equation}
    \begin{aligned}
    \dot{\vx}&=\vA \vx + \vB M_z^a,\\
    \vy_i&=\vC_i\vx.
    \end{aligned}
    \label{eq:only-M-lateral-dynamic-output-model}
\end{equation}
The resulting dynamics of~\eqref{eq:only-M-lateral-dynamic-output-model} after applying the attack signal $M_z^a$, which makes the output equal to zero, is called zero dynamics, and the attack signal is called zero dynamics attacks.
By applying Laplace transformation on~\eqref{eq:only-M-lateral-dynamic-output-model} and setting the output to zero, we get:
\begin{align*}
    s\vx&=\vA\vx+\vB M_z^a,\\
    \vZero&=\vC_i \vx,
\end{align*}
thus:
\begin{equation*}
  \left[\begin{array}{cc}
      s\vI-\vA & -\vB  \\
       \vC  & \vZero 
    \end{array}
    \right] 
    \left[\begin{array}{c}
      \vx \\
    M_z^a
    \end{array}
    \right]=\vZero.
\end{equation*}
The matrix $\vP(s)=\left[\begin{smallmatrix}
      s\vI-\vA & -\vB  \\
       \vC  & \vZero  
\end{smallmatrix}\right]$ is called the Rosenbrock matrix associated with the system~\eqref{eq:only-M-lateral-dynamic-output-model}. The complex values $s_0\in \mathbb{C}$ satisfying $Rank\left(\vP(s_0)\right)< \dim{\left[\begin{smallmatrix}  \vx \\
    M_z^a
\end{smallmatrix}\right]}=3$ are called invariant zeros. 
The following lemma presents necessary and sufficient conditions for the existence of zero dynamics for the model~\eqref{eq:only-M-lateral-dynamic-output-model}.
\begin{lemma}[\cite{pasqualettiControlTheoreticMethodsCyberphysical2015}] 
Consider the linear state space model~\eqref{eq:only-M-lateral-dynamic-output-model}, the system has zero dynamics if and only if it has invariant zeros, i.e. there exist complex value $s_0\in \mathbb{C}$ satisfying $Rank\left(\vP(s_0)\right)<\dim{\left[\begin{smallmatrix}  \vx \\
    M_z^a
\end{smallmatrix}\right]}=3$.
\end{lemma}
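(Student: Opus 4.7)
The plan is to prove both directions of the equivalence by exploiting the Laplace-transform relation already derived just above the lemma, which reduces the zero-output constraint to a linear-algebraic condition on the Rosenbrock matrix $\vP(s)$.

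For the sufficiency direction (\emph{invariant zero implies zero dynamics}), I would start from the hypothesis that there exists $s_0 \in \mathbb{C}$ with $\mathrm{Rank}(\vP(s_0)) < 3$. Then $\ker \vP(s_0)$ contains a nonzero vector which I write as $[\vx_0^T,\ g_0]^T$, satisfying $(s_0\vI - \vA)\vx_0 = \vB g_0$ and $\vC_i \vx_0 = \vZero$. I would then propose the exponential ansatz $\vx(t) = \vx_0\, e^{s_0 t}$ together with the attack input $M_z^a(t) = g_0\, e^{s_0 t}$, and verify by direct substitution that this pair satisfies~\eqref{eq:only-M-lateral-dynamic-output-model} while producing $\vy_i(t) = \vC_i \vx_0\, e^{s_0 t} = \vZero$ for all $t \geq 0$. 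Because the pair is nontrivial, this exhibits genuine zero dynamics and, as a byproduct, provides an explicit closed-form expression for an undetectable attack signal.

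For the necessity direction (\emph{zero dynamics implies invariant zero}), I would reuse the same Laplace-domain identity: any trajectory of~\eqref{eq:only-M-lateral-dynamic-output-model} with $\vy_i(t) \equiv \vZero$ satisfies $\vP(s)\,[\vx(s)^T,\ M_z^a(s)]^T = [\vx(0)^T,\ \vZero^T]^T$ as an identity between rational functions of $s$. If $\vP(s)$ had full column rank at every $s \in \mathbb{C}$, then (via its Smith form) the stacked Laplace-domain vector would be uniquely pinned down by $\vx(0)$ as a rational function of $s$, and a standard modal argument would then rule out any nontrivial zero-output trajectory. Contrapositively, the existence of zero dynamics forces some $s_0$ at which $\vP(s_0)$ loses rank, i.e., an invariant zero.

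The main obstacle is the necessity direction: carrying it out rigorously requires either invoking the Smith normal form of the polynomial matrix $\vP(s)$ or the equivalent invariant-subspace (weakly unobservable subspace) machinery, neither of which is routine to reproduce in a short paragraph. Since the lemma is explicitly attributed to~\cite{pasqualettiControlTheoreticMethodsCyberphysical2015}, I would keep this direction brief by citing the source, and concentrate the proof exposition on the constructive sufficiency argument, which both establishes the equivalence in a self-contained way and supplies the template for the concrete attack signals to be designed in the next section.
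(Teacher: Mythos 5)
Your proposal is correct, but it does more than the paper does: the paper never proves this lemma at all --- it imports it verbatim with the citation to Pasqualetti et al., and the only supporting material is the informal Laplace-domain manipulation just above the statement, which motivates the Rosenbrock matrix $\vP(s)$ without establishing either implication. Your constructive sufficiency argument (take a nonzero kernel vector $[\vx_0^T,\ g_0]^T$ of $\vP(s_0)$ and check the exponential pair $\vx(t)=\vx_0 e^{s_0 t}$, $M_z^a(t)=g_0 e^{s_0 t}$ by substitution) is sound and self-contained, and it is in fact the mechanism that underlies the attacks built later in Propositions~\ref{prop:1} and~\ref{prop:2}, although the paper constructs those signals differently, as state-feedback laws obtained by imposing the zero-output constraint directly in the time domain rather than as open-loop exponentials read off from the kernel of $\vP(s_0)$. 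Two small points you should tighten if you keep the constructive direction: first, you need $\vx_0\neq\vZero$ for the trajectory to qualify as genuine zero dynamics, which here follows because $\vB=[0\ \ b_2]^T$ with $b_2=1/I_z\neq 0$ has trivial kernel, so $\vx_0=\vZero$ would force $g_0=0$ and contradict nontriviality; second, if $s_0$ were complex the exponential pair is complex-valued, and a physically realizable attack requires taking real parts (using that $\bar{s}_0$ is also an invariant zero) --- harmless here since the zeros found in the paper are real, but worth a sentence. Your treatment of necessity (sketch via the Smith form and defer to the cited source) is the reasonable choice and matches the paper's stance, which defers both directions to the reference.
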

\par
\begin{definition}[Disruptive zero dynamics attack~\cite{Shim2022}]
\label{def:disruptive-zero-dynamics}
The zero dynamics attacks are called disruptive if the associated invariant zero is unstable, i.e. the resulting zero dynamics is unstable.
\end{definition}
\subsubsection{Connection with strong observability and detectability properties}
The existence of invariant zeros is related to the properties of strong observability and detectability in linear systems, as explained in the following theorem.
\begin{theorem}[\cite{OvsLetInv}]
   A linear dynamic system is strongly observable if and only if it has no invariant zeros, and it is strongly detectable if and only if all its invariant zeros are stable.
   \label{theo:invariant-zero-stronly}
\end{theorem}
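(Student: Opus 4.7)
The plan is to treat the two equivalences in parallel: in each case the ``only if'' direction is obtained by exhibiting a concrete zero-output trajectory built from an invariant zero, while the ``if'' direction requires characterising the largest set of initial states that can be kept invisible at the output. First I would pin down the definitions I intend to use: the triple $(\vA,\vB,\vC_i)$ in~\eqref{eq:only-M-lateral-dynamic-output-model} is \emph{strongly observable} if $\vy_i(\cdot)\equiv\vZero$ on $[0,\infty)$ forces $\vx(\cdot)\equiv\vZero$ regardless of the injected $M_z^a$, and \emph{strongly detectable} if the same hypothesis forces $\vx(t)\to\vZero$ as $t\to\infty$. Both notions capture what an adversary can hide from an output-based detector even with a free choice of input.

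For the ``only if'' direction, let $s_0\in\mathbb{C}$ be an invariant zero and pick a nonzero pair $(\vx_0,M_0)\in\ker\vP(s_0)$, i.e.\ $(s_0\vI-\vA)\vx_0=\vB M_0$ and $\vC_i\vx_0=\vZero$. Setting
\begin{equation*}
\vx(t)=e^{s_0 t}\vx_0,\qquad M_z^a(t)=e^{s_0 t}M_0,
\end{equation*}
direct differentiation gives $\dot{\vx}=s_0 e^{s_0 t}\vx_0=e^{s_0 t}(\vA\vx_0+\vB M_0)=\vA\vx+\vB M_z^a$, while $\vy_i=\vC_i\vx\equiv\vZero$. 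If $\vx_0\neq\vZero$ this trajectory contradicts strong observability outright, and if additionally $\mathrm{Re}(s_0)\geq 0$ it contradicts strong detectability because $\|\vx(t)\|$ fails to decay.

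For the ``if'' direction I would lean on the weakly unobservable subspace $\mathcal{V}^{\star}$, defined as the set of initial states from which \emph{some} input makes the output identically zero. A standard geometric argument (Wonham--Molinari iteration) identifies $\mathcal{V}^{\star}$ with the largest $(\vA,\vB)$-invariant subspace contained in $\ker\vC_i$ and supplies a friend feedback $F$ such that $(\vA+\vB F)\mathcal{V}^{\star}\subseteq\mathcal{V}^{\star}$. Strong observability is then equivalent to $\mathcal{V}^{\star}=\{\vZero\}$ and strong detectability to $(\vA+\vB F)\big|_{\mathcal{V}^{\star}}$ being Hurwitz. The main obstacle, and the only genuinely substantive step, is matching the spectrum of this restricted operator with the invariant zeros of $\vP(s)$: any eigenvalue $s_0$ with eigenvector $\vx_0\in\mathcal{V}^{\star}$ yields $(\vx_0,F\vx_0)\in\ker\vP(s_0)$, while conversely every invariant zero lies in $\sigma\bigl((\vA+\vB F)|_{\mathcal{V}^{\star}}\bigr)$ independently of the particular friend chosen. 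Once this correspondence is in hand both equivalences drop out immediately: absence of invariant zeros yields $\mathcal{V}^{\star}=\{\vZero\}$ and hence strong observability, while stability of all invariant zeros makes the restricted dynamics Hurwitz and hence the system strongly detectable.
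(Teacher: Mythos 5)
First, note that the paper does not prove Theorem~\ref{theo:invariant-zero-stronly} at all: it is imported verbatim from the cited reference, so there is no in-paper argument to compare against. Judged on its own, your outline follows the standard geometric-control route (zero-output trajectories from kernel elements of the Rosenbrock matrix for necessity; the weakly unobservable subspace $\mathcal{V}^{\star}$ with a friend $F$ for sufficiency), which is essentially the textbook proof. Two small repairs are needed in the ``only if'' direction: for a complex invariant zero $s_0$ the trajectory $e^{s_0t}\vx_0$ is complex, so you must take real (or imaginary) parts to obtain a real trajectory with identically zero output; and you must rule out $\vx_0=\vZero$ in the kernel pair, which requires injectivity of $\vB$ (or of the stacked input/feedthrough map in general) --- this holds in the paper's setting since $\vB=[0\ \ b_2]^T$ with $b_2\neq 0$, but it is not automatic and should be stated.

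The genuine gap is in the ``if'' direction for strong detectability. Your key claim --- that the invariant zeros coincide with $\sigma\bigl((\vA+\vB F)|_{\mathcal{V}^{\star}}\bigr)$ ``independently of the friend chosen'' --- is not correct in general: the invariant zeros correspond to the spectrum of the map induced on the quotient $\mathcal{V}^{\star}/\mathcal{R}^{\star}$, where $\mathcal{R}^{\star}$ is the controllable weakly unobservable subspace, and the eigenvalues of $(\vA+\vB F)|_{\mathcal{R}^{\star}}$ are freely assignable by the choice of friend, hence not zeros in the normal-rank sense. Moreover, a zero-output trajectory need not obey $\dot{\vx}=(\vA+\vB F)\vx$ for your particular friend: the hiding input is unique only when the system is left invertible. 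To close the argument you must either (i) observe that under the paper's definition of invariant zeros (rank of $\vP(s)$ below full column rank) the hypothesis ``all invariant zeros stable'' forces $\vP(s)$ to have full column rank on the closed right half-plane, hence left invertibility and $\mathcal{R}^{\star}=\{\vZero\}$, after which the zero-output input is exactly $F\vx$ and your Hurwitz argument goes through; or (ii) work on $\mathcal{V}^{\star}/\mathcal{R}^{\star}$ and handle the $\mathcal{R}^{\star}$ directions separately. Without one of these steps the detectability equivalence does not follow from the spectrum identification you assert. The strong-observability half is unaffected, since there only the inclusion $\sigma\bigl((\vA+\vB F)|_{\mathcal{V}^{\star}}\bigr)\subseteq\{\text{invariant zeros}\}$ is needed to conclude $\mathcal{V}^{\star}=\{\vZero\}$.
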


The zero dynamics attacks are injection attacks into the input, and this injection is unknown to the system. 

\textbf{Case 1:} If the system is strongly observable, it means that the system can uniquely reconstruct the state based on the output, without having any information about the unknown attack signal. Therefore, the attacker cannot perform zero dynamics attacks on a strongly observable system. 

\textbf{Case 2:}  If the system is not strongly observable, it means that there can be two different states for the same output, specifically, one state belongs to a system under zero dynamics attacks and one state belongs to an attack-free system. The system which is not strongly observable can be strongly detectable (Case 2.1) or not (Case 2.2).

\textbf{Case 2.1:}  If the system is strongly detectable, the attacked system's state will converge to the attack-free system's state over time. 

\textbf{Case 2.2:}  If the system is not strongly detectable, the attacker can perform zero dynamics attacks that cause the state to diverge while the output is identical to an attack-free system.

\subsection{Problem statement}
The objective of this paper is to answer the following questions: 
\begin{itemize}
\item Do the linear lateral dynamics have invariant zeros? 

\item How can an attacker exploit these invariant zeros to perform zero dynamics attacks? 

\item Are these attacks disruptive? 
\end{itemize}
These questions are answered in Propositions~\ref{prop:1},~\ref{prop:2}, and~\ref{prop3}.
Finally, this paper proposes measures to protect the vehicle's lateral dynamics from zero dynamics attacks.
\section{Zero Dynamics attacks against vehicle's lateral dynamics}
\label{sec:Zero Dynamic attack against vehicle lateral dynamic}
The following three subsections consider the output scenarios: yaw rate, lateral acceleration, and their combination. For each scenario, we examine the existence of invariant zeros and zero dynamics attacks, followed by a discussion analyzing the system under attacks.

\subsection{Exclusive yaw rate output}
\begin{proposition}\label{prop:1}
Consider the linear state space model~\eqref{eq:only-M-lateral-dynamic-output-model}, with an output containing only yaw rate, the system has invariant zero ${s_0=a_{11}}$, and the zero dynamics attacks ${M_z^a=-\frac{a_{21}}{b_2}v_y}$ excite this invariant zero, resulting in stable zero dynamics:
\begin{equation}
    \dot{v}_y=a_{11}v_y.
        \label{eq:zero-dynamic-first-case}
\end{equation}
\end{proposition}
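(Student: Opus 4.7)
The plan is to follow the general recipe for invariant zeros: form the Rosenbrock matrix explicitly for this output configuration, find the values of $s$ at which its rank drops, read off the corresponding null-space vector, and then substitute back into the dynamics to identify the zero dynamics equation.

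First I would write down the Rosenbrock matrix for $\vC_1=[0\ 1]$, which is the $3\times 3$ square matrix
\[
\vP(s)=\begin{bmatrix} s-a_{11} & -a_{12} & 0 \\ -a_{21} & s-a_{22} & -b_2 \\ 0 & 1 & 0 \end{bmatrix}.
\]
Since this is square, the rank-drop condition $\operatorname{Rank}(\vP(s_0))<3$ becomes $\det\vP(s_0)=0$. Expanding along the third row (which has a single nonzero entry) gives $\det\vP(s)=b_2(s-a_{11})$, so the unique invariant zero is $s_0=a_{11}$. This confirms existence and pinpoints the zero.

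Next I would compute the kernel of $\vP(a_{11})$ to extract the admissible $(\vx,M_z^a)$. The first and third rows of $\vP(a_{11})$ both force $r=0$ (the first row gives $-a_{12}r=0$, and since $a_{12}\neq 0$ generically, or by direct substitution into the third row, $r=0$). The second row then reduces to $-a_{21}v_y-b_2 M_z^a=0$, yielding exactly the attack law $M_z^a=-\tfrac{a_{21}}{b_2}v_y$ claimed in the statement. Substituting $r=0$ and this $M_z^a$ into the first row of~\eqref{eq:only-M-lateral-dynamic-output-model} gives $\dot v_y=a_{11}v_y+a_{12}\cdot 0+0\cdot M_z^a=a_{11}v_y$, which is~\eqref{eq:zero-dynamic-first-case}.

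Finally, for stability I would inspect the sign of $a_{11}=-2(C_f+C_r)/(v_x m)$: since $C_f,C_r,m,v_x>0$ for a physically meaningful forward-driving vehicle, $a_{11}<0$, so the zero dynamics is asymptotically stable. There is no real conceptual obstacle here; the only step requiring care is the kernel computation at $s_0=a_{11}$, where one must verify that the row forcing $r=0$ is consistent with the other two rows so that a genuine nonzero attack direction exists (it does, parametrized by $v_y$).
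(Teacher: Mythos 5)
Your proposal is correct and follows essentially the same route as the paper: the same Rosenbrock matrix, the same determinant computation $b_2(s-a_{11})$ yielding $s_0=a_{11}$, and the same sign argument for stability. Your derivation of the attack via the kernel of $\vP(a_{11})$ is just a repackaging of the paper's step of setting $r=0$ and $\dot r=0$ in the state equations (the rows of $\vP(a_{11})$ encode exactly those relations), so the two arguments coincide in substance.
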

\begin{proof}
For the case where the output is only the yaw rate, the output matrix is given by~\eqref{eq:C1}. The Rosenbrock matrix in this case is given by:
\begin{equation*}
    \left[ 
    \begin{array}{cc}
     s\vI-   \vA & -\vB \\
       \vC_1  & \vZero
    \end{array}
    \right]=
 \left[ 
    \begin{array}{ccc}
s-a_{11}&-a_{12}&0 \\
-a_{21}&s-a_{22}&-b_2\\
0 & 1 & 0
    \end{array}
    \right],
\end{equation*}
which is a square matrix with a determinant of ${b_2(s - a_{11})}$, thus the Rosenbrock matrix is rank-deficient when ${s = a_{11}}$. Therefore, ${s_0 = a_{11}}$ is an invariant zero of the system.
 Note that ${a_{11}=-2 \frac{C_f+C_r}{v_x m}}$  is negative and the invariant zero is stable. Now we find the input that excites the zero dynamics, the output is zero for any ${t>t_0}$, where ${t_0}$ is the onset of the attacks, thus ${\forall t>t_0}$, ${r=0}$ and ${\dot{r}=0}$. Substituting in the dynamics of~\eqref{eq:only-M-lateral-dynamic-output-model} gives:
\begin{align*}
    \left[
    \begin{array}{c}
         \dot{v}_y   \\
        \dot{r}
    \end{array}
    \right] & = \left[\begin{array}{cc}
        a_{11} & a_{12} \\
        a_{21} &  a_{22}
    \end{array} \right] \left[
    \begin{array}{c}
         v_y  \\
         r 
    \end{array}
    \right] + \left[\begin{array}{c}
         0 \\
         b_2 
    \end{array} \right] M_z^a, \\
     \left[
    \begin{array}{c}
         \dot{v}_y  \\
         0 
    \end{array}
    \right] & = \left[\begin{array}{c}
        a_{11} v_y  \\
        a_{21}v_y  +b_2 M_z^a
    \end{array} \right],
\end{align*}
thus the attack signal  ${M_z^a=-\frac{a_{21}}{b_2}v_y}$ excites the invariant zero of the system, and the system dynamics become ${\dot{v}_y=a_{11}v_y}$.
\end{proof}

\textbf{Remark:} The zero dynamics~\eqref{eq:zero-dynamic-first-case} shows that under zero dynamics attacks, the vehicle experiences lateral sliding without any rotational motion. These zero dynamics are stable dynamics, where the invariant zero ${a_{11}}$ is stable. Consequently, based on Theorem~\ref{theo:invariant-zero-stronly}, the system is not strongly observable but is strongly detectable; this means that although the output remains identically zero, the state is not zero but converges to zero over time. As a result, zero dynamics attacks exist, and these attacks are not disruptive, i.e. the resulting zero dynamics are stable.
Although the lateral velocity converges to zero over time, it can still be dangerous for the system to have lateral movement without being observed.  For instance, the system may believe that the vehicle is moving forward, while lateral movement is occurring.

\subsection{Exclusive lateral acceleration output}
\begin{proposition}\label{prop:2}
Consider the linear state space model~\eqref{eq:only-M-lateral-dynamic-output-model}, with an output containing only lateral acceleration, the system has invariant zero ${s_0 = \frac{a_{11}v_x}{a_{12}+v_x}}$, and the zero dynamics attack
\begin{dmath*}
M_z^a=-\frac{1}{b_2(a_{12}+v_x)}\left((a_{11}^2+a_{21}(a_{12}+v_x))v_y\\+(a_{11}a_{12}+(a_{12}+v_x)a_{22})r\right),
\end{dmath*}
excites this invariant zero, resulting in the following zero dynamics:
\begin{equation}
\begin{aligned}
 \dot{v}_y&=\frac{a_{11}v_x}{a_{12}+v_x}v_y=s_0 v_y,\\
 \dot{r}&=\frac{a_{11}v_x}{a_{12}+v_x}r=s_0 r,
    \end{aligned}
    \label{eq:zero-dynamic-second-case}
\end{equation}
which are stable if and only if ${aC_f-bC_r<0}$.
\end{proposition}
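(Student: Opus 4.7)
The plan is to mirror the structure of the proof of Proposition~\ref{prop:1}, but with the output row $\vC_2$ from~\eqref{eq:C2}. First, I would form the Rosenbrock matrix
\[
\vP(s)=\left[\begin{array}{ccc} s-a_{11} & -a_{12} & 0 \\ -a_{21} & s-a_{22} & -b_2 \\ a_{11} & a_{12}+v_x & 0 \end{array}\right],
\]
and expand the determinant along the third column, which has only one nonzero entry $-b_2$. This reduces the problem to a $2\times 2$ determinant, and a short simplification gives $\det\vP(s)=b_2\bigl[s(a_{12}+v_x)-a_{11}v_x\bigr]$. Setting this to zero immediately identifies the unique candidate invariant zero $s_0=a_{11}v_x/(a_{12}+v_x)$, and a rank check at $s=s_0$ confirms the drop in rank.

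Next, to recover the attack signal and the corresponding zero dynamics, I would impose the undetectability condition $y_2\equiv 0$ throughout the attack. Applied to~\eqref{eq:only-M-lateral-dynamic-output-model} with output matrix $\vC_2$, this yields the algebraic constraint $a_{11}v_y+(a_{12}+v_x)r=0$ and, after differentiation, $a_{11}\dot v_y+(a_{12}+v_x)\dot r=0$. Substituting the state equations for $\dot v_y$ and $\dot r$ produces a scalar linear equation in $M_z^a$ whose solution, after collecting the $v_y$ and $r$ coefficients, is exactly the expression stated in the proposition. Plugging this $M_z^a$ back into the state equation for $\dot r$, and using the constraint $r=-\tfrac{a_{11}}{a_{12}+v_x}v_y$ to rewrite $\dot v_y = a_{11}v_y+a_{12}r$, the $v_y$ term simplifies to $\tfrac{a_{11}v_x}{a_{12}+v_x}v_y=s_0 v_y$; differentiating the same constraint then gives $\dot r=s_0 r$, reproducing~\eqref{eq:zero-dynamic-second-case}.

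Finally, stability of these zero dynamics amounts to the sign condition $s_0<0$. Since $a_{11}=-2(C_f+C_r)/(v_x m)<0$ and $v_x>0$, the numerator $a_{11}v_x$ is negative, so $s_0<0$ iff $a_{12}+v_x>0$. Using $a_{12}+v_x=2(bC_r-aC_f)/(v_x m)$, this inequality reduces to $bC_r-aC_f>0$, i.e. $aC_f-bC_r<0$, as claimed. The main obstacle I anticipate is purely algebraic: keeping the $v_y$ and $r$ coefficients organized so that the seemingly heavy expression for $M_z^a$ collapses cleanly to give $s_0 v_y$ and $s_0 r$. All other steps are direct translations of the template proof of Proposition~\ref{prop:1}.
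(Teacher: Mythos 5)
Your proposal is correct and follows essentially the same route as the paper: same Rosenbrock determinant $b_2\bigl((a_{12}+v_x)s-a_{11}v_x\bigr)$, same derivation of $M_z^a$ from the constraint $a_{11}v_y+(a_{12}+v_x)r=0$ and its derivative, and the same substitution $r=-\tfrac{a_{11}}{a_{12}+v_x}v_y$ to obtain the zero dynamics. Your sign argument via $a_{12}+v_x=2(bC_r-aC_f)/(v_x m)$ is just a rephrasing of the paper's explicit formula $s_0=\tfrac{(C_f+C_r)v_x}{aC_f-bC_r}$, so there is no substantive difference.
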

\begin{proof}
For the case where the output is only the lateral acceleration, the output matrix is given by~\eqref{eq:C2}. The Rosenbrock matrix becomes in this case:
\begin{equation*}
    \left[ 
    \begin{array}{cc}
     s\vI-   \vA & -\vB \\
       \vC_2  & \vZero
    \end{array}
    \right]=
 \left[ 
    \begin{array}{ccc}
s-a_{11}&-a_{12}&0 \\
-a_{21}&s-a_{22}&-b_2\\
  a_{11}   &
         a_{12}+v_x & 0
    \end{array}
    \right],
\end{equation*}
which is a square matrix with a determinant of ${{b_2\left((a_{12}+v_x)s-a_{11}v_{x}\right)}}$, thus the Rosenbrock matrix is rank-deficient when ${s = \frac{a_{11}v_x}{a_{12}+v_x}}$. Therefore, ${s_0 = \frac{a_{11}v_x}{a_{12}+v_x}}$ is an invariant zero of the system. Substituting ${a_{11}=-2 \frac{C_f+C_r}{v_x m}}$, and ${a_{12}= 2 \frac{b C_r-a C_f}{v_x m}-v_x}$, gives: 
\begin{equation}
    s_0=\frac{C_f+C_r}{a C_f-b C_r} v_x.
    \label{eq:invariant-zero-second-case}
\end{equation}
The sign of ${s_0}$, i.e. the stability of the invariant zero, is determined by the sign of the term ${a C_f-b C_r}$. Generally, the stability of matrix ${\vA}$ does not impose a condition on the sign of this term. The stability condition of the matrix ${\vA}$, ensuring the eigenvalues of ${\vA}$ have negative real parts, is~\cite{gillespie_funamentals_2021,hashemi_comprehensive_2016}:
\begin{equation}
      (a+b)^2-\frac{m(aC_f-bC_r)}{C_r C_f } v_x^2>0.
\label{eq:A-stability-conditions}
\end{equation}
The term ${aC_f-bC_r}$ could be positive while the condition~\eqref{eq:A-stability-conditions} is still satisfied. 

Now we find the input that excites the zero dynamics, the output (the lateral acceleration) is zero thus ${ a_y=a_{11}v_y+(a_{12}+v_x)r=0}$,
and the initial conditions ${r_0}$ and ${v_{y0}}$ should satisfy ${a_{11}v_{y0}+(a_{12}+v_x)r_0=0}$,
and the derivatives ${\dot{v}_y}$ and  ${\dot{r}}$ should  satisfy ${a_{11}\dot{v}+(a_{12}+v_x)\dot{r}=0}$,
substituting the dynamics  ${\dot{v}_y}$ and  $\dot{r}$:
\begin{equation*}
  a_{11}(a_{11}v_y+a_{12}r)+(a_{12}+v_x)(a_{21}v_y+a_{22} r +b_2 M_z^a)=0,
\end{equation*}
thus, the attack signal that excites the invariant zero is:
\begin{dmath*}
    M_z^a=-\frac{1}{b_2(a_{12}+v_x)}\left((a_{11}^2+a_{21}(a_{12}+v_x))v_y\\+(a_{11}a_{12}+(a_{12}+v_x)a_{22})r\right),
\end{dmath*}
and the output is identical to zero ${a_{11}v_y+(a_{12}+v_x)r=0}$, thus ${r=-\frac{a_{11}v_y}{a_{12}+v_x}}$, substituting in the dynamics ${ \dot{v}_y=a_{11}v_y+a_{12}r}$ gives ${ \dot{v}_y=\frac{a_{11}v_x}{a_{12}+v_x}v_y=s_0 v_y}$, thus $\dot{r}=\frac{a_{11}v_x}{a_{12}+v_x}r=s_0 r$, which concludes the proof of the proposition. 
\end{proof}
Fig.~\ref{fig:system-behaviour-under-zero-dynamic-attack} presents the phase plane illustrating the system's behavior under zero dynamics attacks, the state evolves along the zero-output manifold: 1) The state follows the green dashed line when the attacks target the yaw rate, setting it to zero; the invariant zero is stable, and the state converges to zero. 2) The state follows the blue closed-dots line when the attacks target the lateral acceleration, setting it to zero with a stable invariant zero, leading to convergence. 3) The state follows the red dotted line when the attacks target the lateral acceleration, setting it to zero with an unstable invariant zero, causing the state to diverge.
\begin{figure}[htbp]
    \centering
    \includegraphics[width=0.99\linewidth]{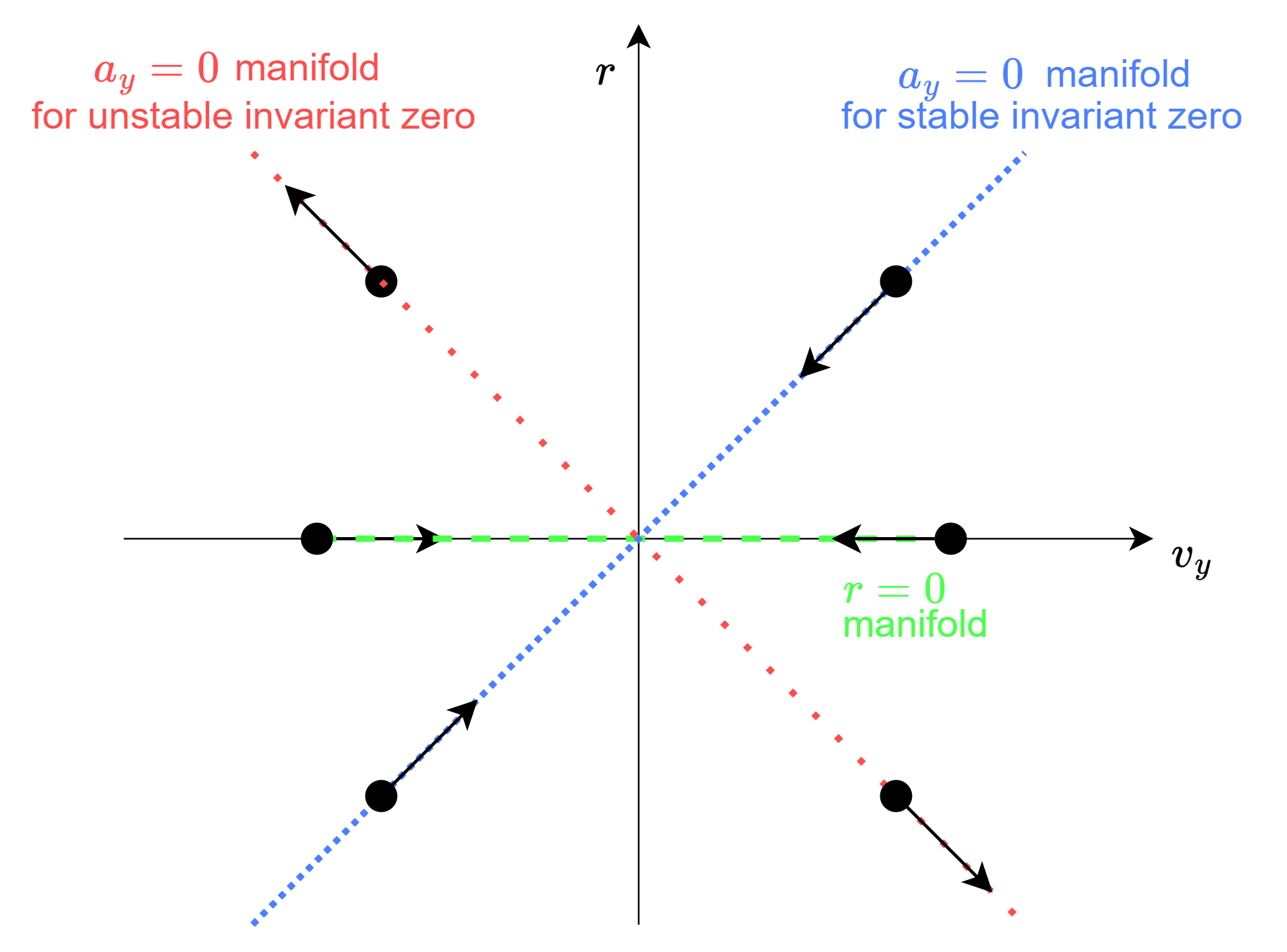}
   \caption{System behavior under zero dynamics attacks for three cases: when the output is the yaw rate (the state lies on the green dashed line), when the output is the lateral acceleration with a stable invariant zero (the state lies on the blue closed-dots line), and when the output is the lateral acceleration with an unstable invariant zero (the state lies on the red dotted line).}
    \label{fig:system-behaviour-under-zero-dynamic-attack}
\end{figure}

\textbf{Remark:}
The zero dynamics~\eqref{eq:zero-dynamic-second-case} show that under zero
dynamics attacks, the vehicle experiences lateral and rotational motion.
According to Proposition~\ref{prop:2},
the invariant zero could be stable or unstable depending on the sign of the term 
${aC_f-bC_r}$. Based on Theorem~\ref{theo:invariant-zero-stronly}, the system is not strongly observable, and it could be not strongly detectable if ${aC_f-bC_r>0}$, in such case, the state diverges to infinity while the output remains at zero. There are threats associated with zero dynamics attacks, and we impose the condition 
\begin{equation}
 aC_f-bC_r<0   
 \label{eq:stable-invariant-zero}
\end{equation}
to prevent them, making the system strongly detectable, and thus ensuring the zero dynamics attacks are not disruptive.

\par
\textbf{Remark, zero dynamics attacks can be detected using lateral and longitudinal acceleration.}

The attacker performs zero dynamics attacks making the lateral acceleration (the output) equal to zero, thus ${a_y=a_{11}v_y+(a_{12}+v_x)r=0}$, while both lateral velocity and yaw rate are nonzero. The longitudinal acceleration, ${a_x}$ is given by ${a_x=\dot{v}_x-v_y r}$, which under the assumption of constant longitudinal velocity becomes ${a_x=-v_y r}$. Note that, for a time window, the longitudinal acceleration can not be equal to zero while both ${v_y}$ and ${r}$ are nonzero. Thus, longitudinal acceleration measurements serve as a detector for zero dynamics attacks that target the lateral acceleration output.

\subsection{Output combines both yaw rate and lateral acceleration}
\begin{proposition}
    \label{prop3}
    Consider the linear state space model~\eqref{eq:only-M-lateral-dynamic-output-model}, with an output that combines both yaw rate and lateral acceleration, the system has no invariant zeros.
\end{proposition}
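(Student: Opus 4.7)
The plan is to form the Rosenbrock matrix associated with the system~\eqref{eq:only-M-lateral-dynamic-output-model} using $\vC_3$ from~\eqref{eq:C3} and show that it is of full column rank for every $s\in\mathbb{C}$. Since the state is two-dimensional, the attack input is scalar, and the combined output is two-dimensional, the Rosenbrock matrix
\begin{equation*}
\vP_3(s)=\left[\begin{array}{ccc}
s-a_{11} & -a_{12} & 0 \\
-a_{21} & s-a_{22} & -b_2 \\
0 & 1 & 0 \\
a_{11} & a_{12}+v_x & 0
\end{array}\right]
\end{equation*}
has dimension $4\times 3$. By the definition of invariant zeros recalled in Section~\ref{sec:Preliminaries and Problem formulation}, it suffices to prove that $\mathrm{Rank}\bigl(\vP_3(s)\bigr)=3$ for all $s\in\mathbb{C}$.

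To do this, the natural move is to exhibit a $3\times 3$ submatrix of $\vP_3(s)$ whose determinant is a nonzero constant in $s$. The rows contributed by the yaw-rate output and the lateral-acceleration output (the third and fourth rows of $\vP_3(s)$) are independent of $s$, so the right submatrix to pick is the one obtained by deleting the first row, namely the block formed by rows $2$, $3$ and $4$. Expanding its determinant along the third column (which has a single nonzero entry $-b_2$ in row $2$) reduces the computation to a $2\times 2$ determinant built entirely from constants, and I expect the result to be proportional to $b_2\,a_{11}$.

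The key step is therefore to observe that $b_2=1/I_z\neq 0$ and $a_{11}=-2(C_f+C_r)/(v_x m)\neq 0$ under the standing physical assumptions on the vehicle parameters, so the selected $3\times 3$ minor is nonzero for every $s\in\mathbb{C}$. This forces $\mathrm{Rank}\bigl(\vP_3(s)\bigr)=3$ uniformly in $s$, which by Lemma~2.2 means the system has no invariant zeros, concluding the proof.

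I do not anticipate a genuine obstacle: the proof is essentially a rank computation. The only real choice is which $3\times 3$ submatrix to expand, and the sparse third column of $\vP_3(s)$ together with the two parameter-only rows from $\vC_3$ make that choice canonical. As a consequence, combining both the yaw rate and the lateral acceleration in the measurement set eliminates the vulnerabilities identified in Propositions~\ref{prop:1} and~\ref{prop:2}: by Theorem~\ref{theo:invariant-zero-stronly}, the model~\eqref{eq:only-M-lateral-dynamic-output-model} with output $\vy_3$ is strongly observable, so no zero dynamics attack of the form considered here can exist.
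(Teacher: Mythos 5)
Your proposal is correct and follows essentially the same route as the paper: both argue that the $4\times 3$ Rosenbrock matrix with $\vC_3$ has full column rank $3$ for every $s\in\mathbb{C}$, hence no invariant zeros exist. Your explicit minor (rows $2$–$4$, with determinant $b_2 a_{11}\neq 0$) simply makes concrete the column-independence that the paper asserts and then double-checks by setting the output to zero, so there is no substantive difference in method.
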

\begin{proof}
    For the case where the output includes both yaw rate and lateral acceleration, the output matrix is given by~\eqref{eq:C3}. The Rosenbrock matrix becomes in this case:
\begin{equation*}
    \left[ 
    \begin{array}{cc}
     s\vI-   \vA & -\vB \\
       \vC_3  & \vZero
    \end{array}
    \right]=
 \left[ 
    \begin{array}{ccc}
s-a_{11}&-a_{12}&0 \\
-a_{21}&s-a_{22}&-b_2\\
0 & 1 & 0\\
  a_{11}  & a_{12}+v_x & 0
    \end{array}
    \right],
\end{equation*}
the three columns of the Rosenbrock matrix are linearly independent regardless of the value of ${s}$, thus the matrix has a rank of ${3}$ and the system has no invariant zeros. This can be seen also by trying to set the output to zero:
\begin{align*}
    \left[
    \begin{array}{c}
         r  \\
         a_y
    \end{array}
    \right]&= \left[
\begin{array}{cc}
     0 & 1 \\
     a_{11} & a_{12}+v_x 
\end{array}
    \right] \left[
    \begin{array}{c}
         v_y  \\
         r 
    \end{array}
    \right],\\
   \vZero&= \left[
\begin{array}{cc}
     0 \\
     a_{11}  v_y
\end{array}
    \right],
\end{align*}
thus, both ${r}$ and ${v_y}$ (along with their derivatives) are zero, resulting in no zero dynamics.
\end{proof}
\textbf{Remark:} Proposition~\ref{prop3} indicates that there is no threat of zero dynamics attacks when the output combines both yaw rate and lateral acceleration. Based on Theorem~\ref{theo:invariant-zero-stronly}, the system is strongly observable, indicating that the system's states can be fully reconstructed from the output.

\subsection{Summary of findings}
When the output consists of yaw rate, the system exhibits a stable invariant zero, allowing the attacker to perform undetectable but non-disruptive attacks. In the case where the output contains lateral acceleration, the system may have a stable or un unstable invariant zero. A new condition~\eqref{eq:stable-invariant-zero} is proposed to ensure a stable invariant zero. When the invariant zero is unstable, the attacker can perform disruptive and undetectable attacks. Also, this paper proposes to measure the longitudinal acceleration and use it as a detector for zero dynamics attacks. 
Finally, when the output includes both yaw rate and lateral acceleration, the system has no invariant zeros, eliminating the possibility of zero dynamics attacks. This paper recommends using sensors measuring both yaw rate and lateral acceleration to enhance the security of vehicle lateral dynamics against such attacks. Table~\ref{table:summary} shows summarizes these findings.\\
\begin{table}[hbt]
    \caption{Summary of findings. The sign $\checkmark$ means the existence, while $\times$  means the non-existence}
    \centering
\begin{tabular}{|c|c|c|c|}
\hline Output& Additional & Threat of zero & Disruptive  \\
measurements& condition & dynamics attack? &attack? \\
\hline
$r$&- & $\checkmark$ &  $\times$ \\
\hline
$a_y$&$aC_f-bC_r>0$& $\checkmark$ &  $\checkmark$  \\
\hline
$a_y$&$aC_f-bC_r<0$& $\checkmark$ &  $\times$  \\
\hline
$a_y, a_x$&- &  $\times$ &  $\times$ \\
\hline
$r, a_y$&- &  $\times$ &  $\times$ \\
\hline
\end{tabular}
    \label{table:summary}
\end{table}
\section{Simulations}
\label{sec:simulation}
We demonstrate through simulations how zero dynamics attacks can go undetected, leaving no trace in the output. The simulations cover two cases: one where the output is the yaw rate $r$ and another where it is the lateral acceleration $a_y$, as there is no invariant zero when both outputs are combined.
The vehicle's parameters used in these simulations are real parameters belonging to a Sports Utility Vehicle (SUV)~\cite{hashemi_comprehensive_2016}, as shown in Table~\ref{table:parameters}.
\begin{table}[hbt]
    \caption{SUV paremeters}
    \centering
\begin{tabular}{|c|l|l|}
\hline Parameter & Value & Description \\
\hline$m$ & $2270\ (kg)$ & Vehicle mass \\
$I_z$ & $4600\ (kg.m^2)$  & Moment of inertia \\
$a$ & $1.421\ (m)$ & Front axle to CG \\
$b$ &  $1.438\ (m)$  & Rear axle to CG \\
$C_{\alpha_f}$ &$69800\ (N/rad)$  & Front cornering stiffness \\
$C_{\alpha_r}$  & $69600\ (N/rad)$ & Rear cornering stiffness \\
\hline
\end{tabular}
    \label{table:parameters}
\end{table}

\subsection{Exclusive yaw rate output}
Firstly, we consider zero dynamics attacks that aim to maintain $r$ equal to zero, as stated in Proposition~\ref{prop:1}. For this simulation, we assume longitudinal velocity $v_x$ is equal to $25\ m/s$ and that the attacks occur at the initial time when lateral velocity $v_y$ is equal to $5\ m/s$. Fig.~\ref{fig:ZeroOutputCase1} shows $v_y$ and $r$ for a duration of $1$ second. $r$ remains equal to zero while $v_y$ is not. These attacks are undetectable but not disruptive, as $v_y$ converges to zero. 
Secondly, we consider zero dynamics attacks that aim to perform undetectable attacks i.e. the output is identical to the output of an attack-free case, while the lateral velocities have different trajectories. 
For this simulation, we assume $v_x=25\ m/s$ for both the attack and attack-free cases. The attacks occur at the initial time, with the lateral velocity being $5\ m/s$ in the attack case trajectory, and $-5\ m/s$ in the attack-free case trajectory.
We consider the following steering angle input for both attack and attack-free cases:
\begin{equation*}
   \delta(t)= \left\{
\begin{array}{cc}
    0 & t<=0.1, \\
    \sin(10t) & t>0.1.
\end{array}
    \right.
\end{equation*}
Fig.~\ref{fig:TwoTrajectoriesCase1} shows the lateral velocity and the yaw rate for a duration of $1$ second. Both attack-case trajectory and attack-free case trajectory have the same output, but different lateral velocities, however, the attack-case lateral velocity converges to the attack-free one.

\begin{figure}[hbt]
    \centering
    \includegraphics[width=0.99\linewidth]{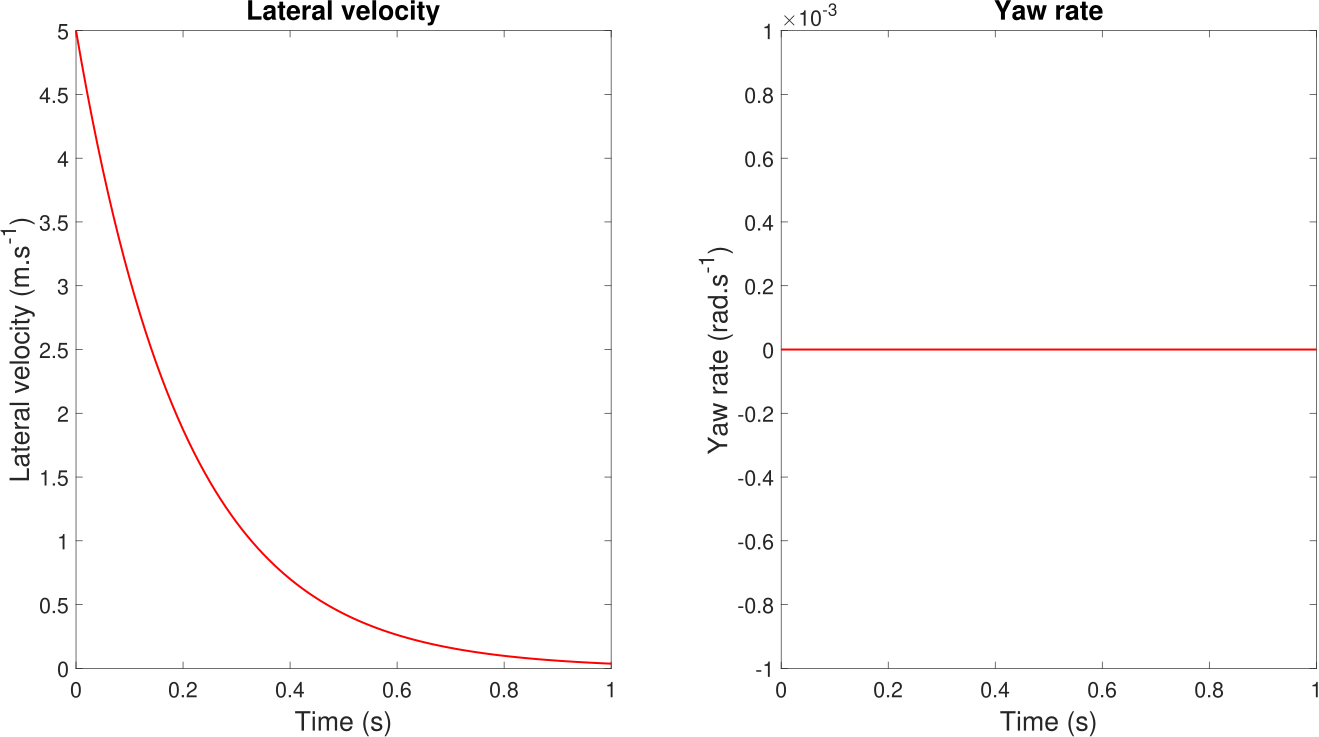}
    \caption{Lateral velocity $v_y$ and yaw rate $r$ when 
the zero dynamics attacks aim to maintain $r$ equal to zero.}
    \label{fig:ZeroOutputCase1}
\end{figure}

\begin{figure}[hbt]
    \centering
    \includegraphics[width=0.99\linewidth]{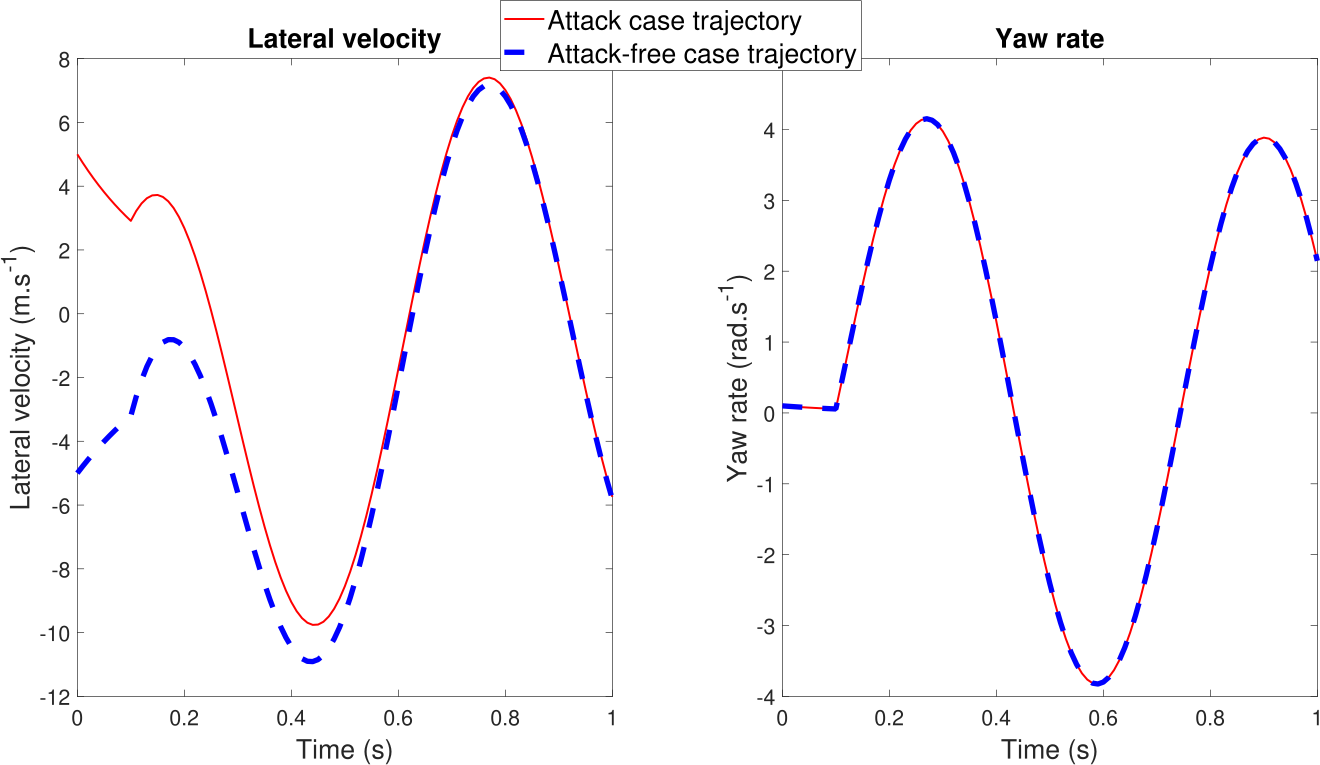}
    \caption{Lateral velocity $v_y$ and yaw rate $r$ when 
the zero dynamics attacks aim to perform undetectable attacks i.e. the output is identical to the one of an attack-free case, while the lateral velocities have different trajectories.}
    \label{fig:TwoTrajectoriesCase1}
\end{figure}
\textbf{Note:} Although the attacks are not disruptive, i.e. the lateral velocity is converging, having a lateral movement that is not observed by the system can still be dangerous in practical situations, on the highway for instance. 
\subsection{Exclusive lateral acceleration output}
\label{subsec:simulation-ay}
We consider zero dynamics attacks that aim to maintain $a_y$ equal to zero, as stated in Proposition~\ref{prop:2}.
The invariant zero, given in~\eqref{eq:invariant-zero-second-case} has large value considering the parameters in Table~\ref{table:parameters}, thus we consider longitudinal velocity $5\ m/s$, and the value of the invariant zero, in this case, is $s_0=-775.3\ 1/s$, which is stable and causes the zero dynamics to converge quickly. 
For this simulation, we assume that the attacks occur at the initial time when the yaw rate is equal to $1 \ rad/s$ and the lateral velocity is equal to $6.4\times 10^{-3} \ m/s$.
Fig.~\ref{fig:ZeroOutputCase2Stable} shows the lateral velocity, yaw rate, and lateral acceleration. The lateral acceleration remains zero while the lateral velocity and yaw rate are not, however, they converge to zero. 

\begin{figure}[hbt]
    \centering
    \includegraphics[width=0.99\linewidth]{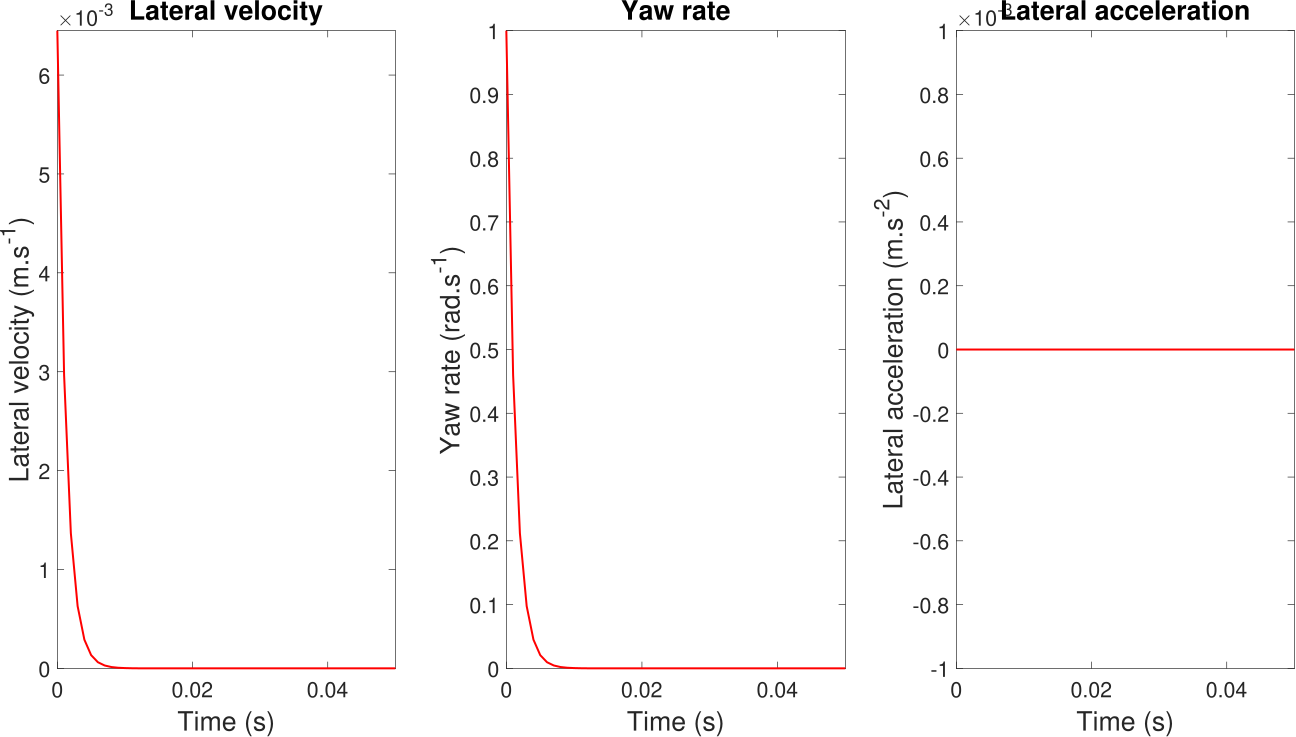}
    \caption{Lateral velocity $v_y$, yaw rate $r$, and lateral acceleration $a_y$ when 
the zero dynamics attacks aim to maintain $a_y$ equal to zero, in the case of stable invariant zero.}
    \label{fig:ZeroOutputCase2Stable}
\end{figure}
Now, we consider a vehicle where the condition~\eqref{eq:stable-invariant-zero} is not satisfied, e.g. the front axle to CG distance is $a=1.521\ m$, the invariant zero will have the value $s_0=114.6\ 1/s$, which is unstable; while the eigenvalues of the matrix $\vA$, in this case, are $(-23.5,
  -27.6)$, indicating its stability. Fig.~\ref{fig:ZeroOutputCase2UnStable} shows that the lateral acceleration remains zero, while the lateral velocity and yaw rate diverge. It is important to mention that, the linear model is not guaranteed to be valid when the vehicle enters the unstable mode.
\begin{figure}[hbt]
    \centering
    \includegraphics[width=0.99\linewidth]{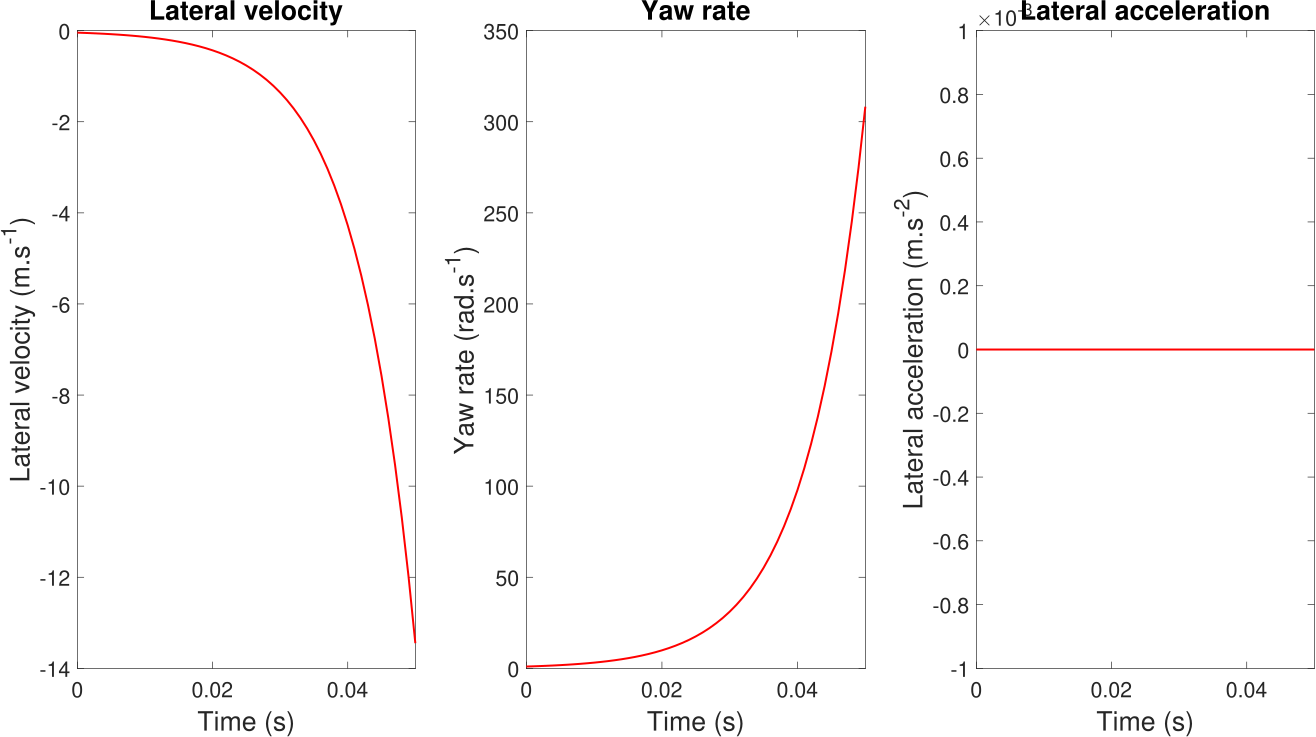}
    \caption{Lateral velocity $v_y$, yaw rate $r$, and lateral acceleration $a_y$ when 
the zero dynamics attacks aim to maintain $a_y$ equal to zero, in the case of unstable invariant zero.}
    \label{fig:ZeroOutputCase2UnStable}
\end{figure}

\section{Conclusion}
\label{sec:conclusion}
This paper studies the linear vehicle lateral dynamics and identifies the three types of outputs. It demonstrates how zero dynamics attacks can exploit the invariant zeros of a linear system to perform undetectable attacks. For each case of the lateral model output, the system’s invariant zeros are studied and analyzed.
This paper recommends that the output consists of both yaw rate and lateral acceleration to prevent zero dynamics attacks, and it recommends using longitudinal acceleration measurements as a detector for zero dynamics attacks when only accelerometers are available.
Future works will consider model nonlinearities and saturations, sensors' noises and more realistic simulations using vehicles' simulators.  
\bibliographystyle{abbrv}
\bibliography{root}
\end{document}